\documentclass[a4paper]{article}

\usepackage{fullpage}
\usepackage{graphicx}

\usepackage{microtype}

\usepackage{amssymb}
\usepackage{mathtools}
\usepackage{amsthm}

\usepackage{xspace}
\usepackage{color}

\usepackage{todonotes}

\newcommand{\cparagraph}[1]{\paragraph*{#1}}

\newtheorem{theorem}{Theorem}

\newtheorem{lemma}[theorem]{Lemma}

\newtheorem{observation}[theorem]{Observation}

\usepackage{hyperref}
\hypersetup{%
  breaklinks,%
  colorlinks=true,%
  linkcolor=[rgb]{0.45,0.0,0.0},%
  citecolor=[rgb]{0,0,0.45}
}

\makeatletter
\partopsep\z@ \textfloatsep 10pt plus 1pt minus 4pt
\def\section{\@startsection {section}{1}{\z@}%
  {-3.5ex plus -1ex
    minus -.2ex}{2.3ex plus .2ex}{\large\bf}}
\def\subsection{\@startsection{subsection}{2}%
  {\z@}{-3.25ex plus
    -1ex minus -.2ex}{1.5ex plus .2ex}{\normalsize\bf}}
\def\@fnsymbol#1{\ensuremath{\ifcase#1\or 1\or 2\or 3\or 4\or
    5\or 6\or 7 \or 8 \or 9 \or 10\or 11 \else\@ctrerr\fi}}
\makeatother

\title{Compatible Triangulations of Simple Polygons}

\author{Peyman Afshani%
  \thanks{Department of Computer Science, Aarhus University, Aarhus, Denmark. peyman@cs.au.dk}
  \and
  Boris Aronov%
  \thanks{Department of Computer Science and Engineering, Tandon
    School of Engineering, New York University, Brooklyn, NY,
    USA. boris.aronov@nyu.edu} 
  \and
  Kevin Buchin%
  \thanks{Faculty of Computer Science, TU Dortmund, Germany. kevin.buchin@tu-dortmund.de}
  \and
  Maike Buchin%
  \thanks{Faculty of Computer Science, Ruhr University Bochum, Germany}
  \and
  Otfried Cheong%
  \thanks{SCALGO, Aarhus, Denmark. otfried@scalgo.com}
  \and
  Katharina Klost%
  \thanks{Institut für Informatik, Freie Universität
    Berlin, and Universität Tübingen, Germany. kathklost@inf.fu-berlin.de}
  \and
  Carolin Rehs%
  \thanks{Faculty of Computer Science, TU Dortmund, Germany,
    and Department of Mathematics and Computer Science, TU Eindhoven,
    The Netherlands. carolin.rehs@tu-dortmund.de}
  \and
  Günter Rote%
  \thanks{Institut für Informatik, Freie Universität
    Berlin. rote@inf.fu-berlin.de}}

\newcommand{\TT}{\mathcal{T}}

\newcommand{\ComputeBlock}{\textit{ComputeBlock}}
\newcommand{\update}{\textit{update}}
\newcommand{\blockupdate}{\textit{block-update}}
\newcommand{\bigO}{O}

\let\leq\leqslant
\let\geq\geqslant
\let\le\leqslant
\let\ge\geqslant
\let\theta\vartheta

\begin{document}

\maketitle

\begin{abstract}
  Let $P$ and $Q$ be simple polygons with $n$~vertices each. We wish
  to compute triangulations of $P$ and~$Q$ that are combinatorially
  equivalent, if they exist.  We consider two versions of the problem:
  if a triangulation of~$P$ is given, we can decide in~$\bigO(n\log n
  + nr)$ time if~$Q$ has a compatible triangulation, where~$r$ is the
  number of reflex vertices of~$Q$.  If we are already given the
  correspondence between vertices of~$P$ and~$Q$ (but no
  triangulation), we can find compatible triangulations of~$P$ and~$Q$
  in time~$\bigO(M(n))$, where $M(n)$ is the running time for
  multiplying two~$n\times n$ matrices.
\end{abstract}

\section{Introduction}

We consider two simple polygons~$P$ and~$Q$ that both have
$n$~vertices, and want to compute triangulations of~$P$ and~$Q$ that
are \emph{compatible}:  we can number the vertices of both~$P$ and~$Q$
counter-clockwise from~$0$ to~$n-1$, and a diagonal~$(i,j)$ is in the
triangulation of~$P$ if and only if it is in the triangulation of~$Q$, 
see Figure~\ref{fig:example}.

\begin{figure}
 \centerline{\includegraphics{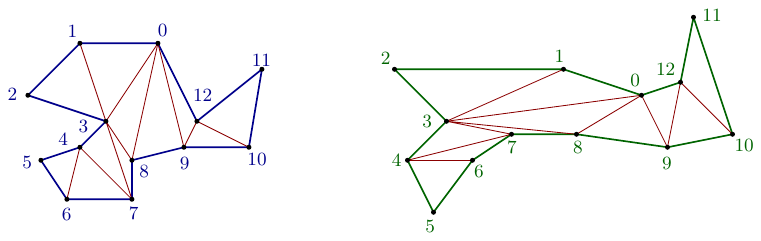}}
 \caption{Compatible triangulations of two polygons}
 \label{fig:example}
\end{figure}

We study two variants of this problem:
In the first variant, we are given a triangulation of~$P$, and ask if
there exists a numbering of the vertices of~$Q$ such that it has a
triangulation compatible to the triangulation of~$P$.  Clearly this
problem can be solved by trying all the $n$~distinct numberings
of~$Q$.  For a fixed numbering, we then only need to test if the
diagonals appearing in~$P$ exist in~$Q$. This takes~$\bigO(n^2)$ time
in total.

We improve this by giving an~$\bigO(n\log n + nr)$ time algorithm,
where~$r$ is the number of reflex vertices of~$Q$.  As part of our
algorithm, we give a data structure that stores a simple polygon~$P$
with~$n$ vertices in~$\bigO(n \log n)$ space, takes~$\bigO(n \log n)$
preprocessing time, and can answer in constant time questions of the form:
are vertices~$i$ and~$j$ visible in~$P$?

In the second variant, the vertex numbering of~$P$ and~$Q$ is given
and fixed, and we ask whether there are compatible triangulations
of~$P$ and~$Q$. It is easy to transform this question to the
following: Given a graph~$G$ drawn in the plane with straight edges
(possibly with crossings) whose vertices form a convex $n$-gon~$C$,
does it contain a triangulation of~$C$?

It is not difficult to solve this question in time~$\bigO(n^{3})$
using dynamic programming~\cite{ARONOV199327}.  We improve this
to~$\bigO(n^{\omega})$, where~$\omega$ is the exponent of matrix
multiplication.

\cparagraph{Related Work.}

It is well-known that it is not always possible to find compatible
triangulations for two simple polygons~$P$ and~$Q$ of~$n$
vertices. However, Aronov, Seidel, and Souvaine show that by adding
$\bigO(n^2)$ Steiner points, it is always possible to find compatible
triangulations~\cite{ARONOV199327}. They further show that
sometimes~$\Omega(n^2)$ Steiner points are necessary.  They also pose
the question how to find the smallest number of Steiner points
required for a given pair of simple polygons. This question is still
open for simple polygons, whereas for polygons with holes it has been
proven to be NP-hard by Lubiw and Mondal~\cite{LUBIW202097}.

\section{A data structure for vertex visibility queries in a simple polygon}

\begin{theorem}
  \label{thm:ds} A simple $n$-gon~$P$ can be preprocessed in
  time~$\bigO(n \log n)$ into a data structure with~$\bigO(n \log n)$
  space such that we can answer in constant time queries of the form:
  are vertex~$i$ and~$j$ visible in~$P$?
\end{theorem}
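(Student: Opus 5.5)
The plan is to use a balanced recursive decomposition of~$P$ by diagonals, with the usual separator property: whenever two vertices~$i,j$ are first separated by a diagonal~$d$, they are visible in~$P$ if and only if the segment~$ij$ crosses~$d$ at a point that both~$i$ and~$j$ see, each within its own side. For every vertex and every level of the decomposition, I will precompute the interval of the separating diagonal visible from that vertex. A query then reduces to locating the separating node in constant time and doing two interval tests. The parts I expect to need most care are the constant-time location of the separating node and the linear-time computation of the intervals at each node.

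\emph{Decomposition.} Triangulate~$P$ in $\bigO(n\log n)$ time. Its dual tree~$T$ has maximum degree~$3$, so it admits a centroid-edge decomposition: removing one dual edge (a diagonal~$d$) leaves two subtrees, each of size at least roughly a third of the total. Recursing gives a decomposition tree~$D$ of depth $\bigO(\log n)$. Each node of~$D$ corresponds to a subpolygon~$R$ bounded by edges of~$P$ and by diagonals, and the node's diagonal $d=(a,b)$ splits~$R$ into~$R_1$ and~$R_2$. Each vertex of~$P$ occurs in $\bigO(\log n)$ subpolygons at each level in total summing to~$\bigO(n)$ per level, so storing $\bigO(1)$ words per (vertex, node) incidence uses $\bigO(n\log n)$ space.

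\emph{Key structural claim.} Suppose~$i$ and~$j$ are vertices of~$R$. If~$ij$ is a visibility segment in~$P$, then it lies in~$R$. The reason is that every boundary diagonal of~$R$ splits~$P$ into two parts. A segment that left~$R$ through such a diagonal would have to cross it a second time to return to~$j$, which is impossible for two segments.

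Now let~$R$ be the node where~$i\in R_1\setminus R_2$ and $j\in R_2\setminus R_1$. Then $ij$ is a visibility segment if and only if it crosses~$d$ at a point~$x$ such that~$ix\subset R_1$ and~$xj\subset R_2$.

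For each vertex~$v$ of~$R_1$, the set of points of~$d$ visible from~$v$ inside~$R_1$ is an interval~$I_v^R$, possibly empty; the same holds on the~$R_2$ side. The query then computes the intersection~$x$ of line~$ij$ with~$d$ and checks $x\in I_i^R\cap I_j^R$.

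The endpoints $a,b$ need separate treatment. If $i=a$, then line~$ij$ meets~$d$ only at~$a$. So the segment~$aj$, which starts inside~$R_2$ near~$j$, stays in~$R_2$, and the test becomes simply $a\in \overline{I_j^R}$. I therefore assign~$a$ and~$b$ to both sides and include the endpoints in the stored intervals.

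\emph{Locating the node in $\bigO(1)$.} For each vertex~$v$, fix one incident triangle~$t_v$. Each triangle is a leaf-level region of~$D$. I would build a constant-time LCA structure (Harel--Tarjan or Bender--Farach-Colton) on~$D$ in linear time. For a query, the node $\mathrm{LCA}(t_i,t_j)$ is the first subpolygon whose diagonal separates~$t_i$ from~$t_j$.

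If~$i$ or~$j$ is an endpoint of that node's diagonal, the endpoint rule above applies directly. This also covers the case where~$i$ lies on the~$t_j$ side through another incident triangle, since then~$i$ is~$a$ or~$b$.

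To retrieve the stored interval for the pair (vertex, node) in constant time, each vertex keeps an array indexed by depth in~$D$. This works because the nodes containing~$v$ at a given depth that are ancestors of~$t_v$ are unique. If~$i$ is not in that ancestor chain but is an endpoint of~$d$, the endpoint case needs no interval for~$i$.

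\emph{Preprocessing the intervals (main obstacle).} For each node I need all intervals~$I_v^R$ for $v\in R_1$ (and symmetrically for~$R_2$) in time linear in~$|R|$. I would compute the shortest-path trees from~$a$ and from~$b$ inside~$R_1$ in linear time (Guibas--Hershberger et al.), given the triangulation restricted to~$R_1$. A vertex~$v$ sees part of~$d$ exactly when its shortest paths to~$a$ and~$b$ form an open funnel toward~$d$. The visible interval is then cut out by extending the first edges of those two paths, namely the edges from~$v$ to the first vertex on each path, until they hit~$d$.

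Making this precise and handling the degenerate cases is the step I expect to require the most care. Those cases are a vertex that sees no point of~$d$, and a vertex with a shortest path that is a straight segment to~$a$ or~$b$. Once this is done, the time is linear per node, hence $\bigO(n)$ per level and $\bigO(n\log n)$ overall. Together with the triangulation and LCA preprocessing, this gives the claimed bounds.
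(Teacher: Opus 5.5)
Your proposal is correct and follows essentially the same route as the paper. Both use a balanced recursive diagonal decomposition of a triangulation, per-vertex depth-indexed arrays storing the visible interval on each splitting diagonal, a constant-time LCA query to locate the separating node, and a separate endpoint rule. The differences are minor or additions: you anchor each vertex at an arbitrary incident triangle rather than at the highest node whose diagonal is incident to it, and you supply the linear-time-per-node interval computation (via shortest-path trees from $a$ and $b$), which the paper leaves unstated.
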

\begin{proof}
  We start by computing a triangulation of~$P$. We then recursively split the
  triangulation using diagonals, obtaining a binary
  tree~$\TT$ of height $\bigO(\log n)$.  Each node of~$\TT$
  corresponds to a region~$R$ of~$P$, which is then split by a
  diagonal~$d=ab$ into the two regions corresponding to the children
  of the node.  We compute, for each vertex~$v$ in the region~$R$
  different from $a$ and~$b$, the interval~$d_v$ on~$d$ that is
  visible from~$v$, and store this information with~$R$, see
  Figure~\ref{fig:splitting}.
  \begin{figure}[htb]
    \centerline{\includegraphics{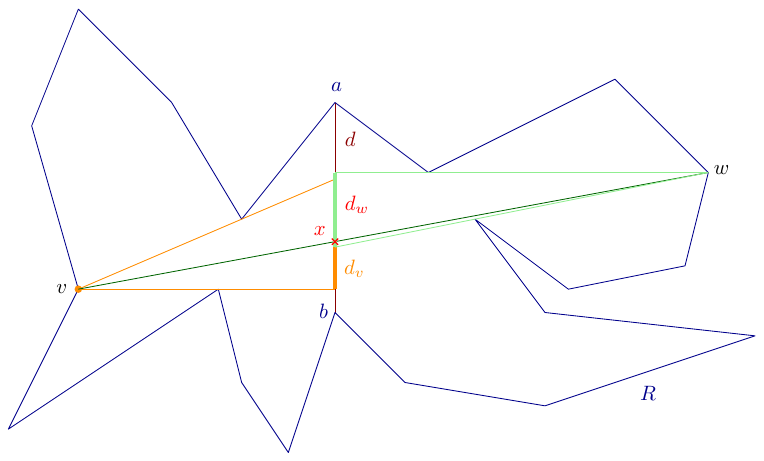}}
    \caption{Splitting the region~$R$ by a diagonal~$d$}
    \label{fig:splitting}
  \end{figure}

    To answer a visibility query for two vertices~$v$ and~$w$, we find
    the smallest region~$R$ that contains both~$v$ and~$w$.  Let $d$
    be the diagonal that splits~$R$.  If $v$ and $w$ are disjoint
    from~$d$, the segment~$vw$ lies in~$P$ if and only if it
    intersects the diagonal~$d$, and the intersection point~$x$ lies
    in both intervals~$d_{v}$ and~$d_{w}$, see
    Figure~\ref{fig:splitting}.  This can be checked in constant time.
    
    If one of the points, say $w$, is an endpoint of $d$, the
    points~$v$ and $w$ are visible if $w$ lies in~$d_v$.

    To find the region~$R$ for answering the query, we associate each
    vertex~$v$ of~$P$ to the highest node whose region is split by a
    diagonal incident to~$v$, or, if there is no such diagonal, to the
    unique leaf triangle containing~$v$.  The correct region for two
    query vertices~$v$ and~$w$ is then the lowest common ancestor of
    the two associated nodes in~$\TT$, which can be found in constant
    time after $O(n \log n)$ preprocessing time~\cite{BENDER200575}.

    Let us finally discuss how to store the intervals $d(v)$: If a
    vertex $v$ is associated to a node~$R$ at depth~$i$, let the
    sequence of regions on the path to that node be
    $R^0=P,R^1,R^2,\ldots,R^{i-1},R^i=R$, and let
    $d^0,d^1,d^2,\ldots,d^{i-1}$ be the corresponding splitting
    diagonals.  Then we store the intervals $d^0(v),d^1(v),
    \ldots,d^{i-1}(v)$ in an array of length~$\bigO(\log n)$.  In this
    way, with $\bigO(n\log n)$ space, we can access the required
    intervals in constant time.
\end{proof}

\section{One triangulation is given}

\begin{theorem}
  \label{thm:onefixed}
  Given two simple~$n$-gons~$P$ and~$Q$, and a triangulation $T$
  of~$P$, we can determine in \(O(n\log n + nr)\) time if there is a
  compatible triangulation in \(Q\), where $r$ is the number of reflex
  vertices of~$Q$.
\end{theorem}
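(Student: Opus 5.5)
We will try all $n$ cyclic shifts $s$ of the numbering of~$Q$; for shift~$s$, vertex~$i$ of~$P$ corresponds to vertex $i+s \bmod n$ of~$Q$. (We assume both polygons are numbered counter-clockwise, so orientation is fixed.) A shift~$s$ works if and only if every one of the $n-3$ diagonals $(i,j)$ of~$T$ maps to a segment $(i+s,j+s)$ that is a diagonal of~$Q$. Since these diagonals are non-crossing in~$P$, their images are non-crossing chords of the convex position numbering, and hence, if all lie inside~$Q$, they form a triangulation of~$Q$. The naive approach costs $O(n)$ visibility tests per shift, or $O(n^2)$ in total. We first preprocess~$Q$ with Theorem~\ref{thm:ds} in $O(n\log n)$ time, so that each test costs $O(1)$. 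The task is then to reduce the number of tests to $O(r)$ per shift.

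The key observation is that a diagonal of~$T$ is relevant only if its image is incident to a reflex vertex of~$Q$. Consider a shift~$s$. Every triangle $(i,j,k)$ of~$T$ maps to a triangle of~$Q$ whose vertices appear in counter-clockwise order. Within the candidate triangulation, an image chord can fail to lie in~$Q$ only if the polygon boundary enters it, and the boundary can enter only through a reflex vertex. We make this precise as follows: a polygon all of whose vertices are convex is convex. Therefore, if the sub-polygons cut off by the candidate chords are valid, then every sub-polygon whose corners are all convex angles is convex. Concretely, we process the dual tree of~$T$ and show that the following holds: \emph{all images are valid diagonals if and only if, for each triangle of~$T$, every image vertex of that triangle has its angle in~$Q$ covered properly}. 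We then simplify this to checking only the chords incident to reflex vertices of~$Q$. The argument for the simplification is this: consider the union of the images of all triangles of~$T$ that are incident to no reflex vertex. Its boundary is formed by image chords and polygon edges. If all chords at reflex vertices are valid, then the sub-polygons they cut off have only convex corners, except at chord endpoints where the triangle angles sum correctly. Those sub-polygons are therefore convex, and all chords inside them are automatically valid.

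Under this plan, for each shift~$s$ we test only the diagonals of~$T$ whose image has an endpoint at one of the $r$ reflex vertices of~$Q$. The number of such diagonals can be large, since a reflex vertex may have high degree in~$T$. To handle this, we instead test only the two \emph{extreme} diagonals at each reflex vertex~$v$, namely the first and last in angular order. We must also verify that the angular order of the fan at~$v$ is consistent, and we do this by checking that the fan's outer chords are diagonals and lie in the cone of~$v$. This yields $O(r)$ visibility queries per shift, each answered in $O(1)$ time. We precompute, for each vertex~$i$ of~$P$, its extreme incident diagonals in~$T$ in $O(n)$ total time. The total running time is $O(n\log n + nr)$.

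The main obstacle is the correctness claim that validity of $O(1)$ chords per reflex vertex implies validity of all images. I would try to prove it by induction on the dual tree. We cut~$Q$ along the verified chords into pieces. A piece that contains no reflex vertex in the interior of its boundary chain has all corners convex. These are the polygon corners that are convex, together with the corners at chord endpoints, which are sub-angles of the original angle of~$Q$. Such a piece is therefore a convex polygon, in which every chord is a diagonal. If this fan argument fails, the fallback is to test all diagonals of~$T$ incident to reflex images. For this we would bound the total work by summing over shifts, charging each test to a pair consisting of a reflex vertex of~$Q$ and a vertex of~$P$. Since each vertex of~$P$ maps onto a given reflex vertex exactly once over all shifts, the total number of tests is $\sum_{\text{reflex } v}\sum_i \deg_T(i) = O(nr)$. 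This amortized bound suffices on its own, so I would actually adopt it as the primary argument and avoid the extreme-diagonal refinement.
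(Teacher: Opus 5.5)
Your running-time bookkeeping is sound, and neat. Over all shifts, the vertex of $P$ landing on a fixed reflex vertex $q_i$ runs through every vertex of $P$ exactly once. So scanning its whole adjacency list in $T$ costs only $O(n)$ per reflex vertex, which gives $O(nr)$ without the paper's merging of sorted lists.

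The gap is correctness: it is false that validating the images of the diagonals of $T$ incident to reflex vertices of $Q$ suffices. You argue that the corners of the resulting pieces at chord endpoints are convex because they are sub-angles of angles of $Q$. But a sub-angle of a reflex angle can itself be reflex. Suppose the image diagonals at a reflex vertex $q_i$ do not cut its angle into parts smaller than $180^\circ$. Then the image triangle $q_iq_aq_b$ occupying the reflex part cannot lie in $Q$, so its third side $q_aq_b$ is not a diagonal, and you never test it when $q_a,q_b$ are convex. The extreme case is a vertex of degree $0$ in $T$ landing on $q_i$: you test nothing, yet $q_{i-1}q_{i+1}$ is then never inside $Q$.

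Concretely, let $Q=(0,0),(-1,2),(-3,5/2),(0,-3),(3,5/2),(1,2)$. It is counter-clockwise and $q_0$ is its only reflex vertex. Let $T=\{p_0p_2,p_2p_4,p_4p_0\}$ be a triangulation of a convex hexagon $P$. Every shift passes your test: either $q_0$ receives the two diagonals $q_0q_2,q_0q_4$, which are genuine diagonals of $Q$, or it receives none. But $q_2q_4$ and $q_1q_5$ both run outside $Q$ across the notch. So no compatible triangulation exists, and your algorithm answers wrongly.

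The missing ingredient is the paper's wedge condition. For each reflex $q_i$ you must certify that its angle is split into convex parts. Either validate an image diagonal inside the wedge $w_i$ formed by extending the two edges at $q_i$, or validate all three sides of the image triangle at $q_i$ whose far side crosses $w_i$. Only then do the validated chords cut $Q$ into pieces whose corners are all convex, hence convex pieces, so the remaining chords are automatically diagonals.

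Your amortization survives this repair. For every reflex $q_i$ and every shift, test all image diagonals at $q_i$ together with the third sides of all image triangles at $q_i$. If the three sides of such a triangle lie in $Q$, so does the triangle, which forces its angle at $q_i$ to be convex. This costs $O(\deg_T(p_{i-s})+1)$ per pair and still sums to $O(nr)$.
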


\begin{proof}
  Let \(P=p_0,p_1,\dots p_{n-1}\) and \(Q=q_0,q_1,\dots, q_{n-1}\)
  where vertices are numbered counter-clockwise and vertex indices are
  considered modulo~$n$.  Our goal is to find a \emph{rotation} of the
  numbering of the vertices of~$Q$ by $s$~steps, such that for all
  diagonals~$p_ip_j$ in~$T$, the corresponding
  segment~$q_{i+s}q_{j+s}$ is a diagonal lying inside~$Q$.  In fact,
  we will identify \emph{all} such rotations~$s$.

  The following observation will be handy.
  \begin{observation}
    \label{obs:order}f
    For every vertex~$p_{i}$ of~$P$, there are indices
    $1<s_1<s_2<\cdots<s_\ell<n-1$ such that $p_ip_{i+s_1},
    p_ip_{i+s_2},\ldots p_ip_{i+s_\ell}$ are the diagonals incident to
    $p_i$ in the triangulation of $P$, in counterclockwise order
    around~$p_i$.  \qed
  \end{observation}

  Consider a reflex vertex~$q_{i}$ of~$Q$.  If we extend the two edges
  incident to~$q_{i}$ into the interior of~$Q$, they form a
  wedge~$w_{i}$.  Any triangulation of~$Q$ contains either
  \begin{itemize}
  \item a diagonal~$q_{i} q_{j}$ incident to~$q_{i}$ that lies
    inside the wedge~$w_{i}$, or
  \item
    a triangle $q_{i}q_{j}q_k$ whose opposite side $q_jq_{k}$ crosses
    the wedge~$w_{i}$.
  \end{itemize}
  
  Let us say that a rotation~$s$ \emph{satisfies} the reflex
  vertex~$q_{i}$ if the triangulation $T$ of~$P$, after applying the
  rotation~$s$, fulfills one of the two conditions.

  Conceptually, our algorithm fills a table~$G$ with $n$~rows
  corresponding to the different rotations~$s=0,\dots,n-1$, and
  $r$~columns corresponding to the reflex vertices of~$Q$.  We will
  \emph{mark} the cell~$G[s, i]$ if rotation~$s$ satisfies~$q_{i}$.
  In the end, we only need to identify the rows~$s$ of~$G$ where all
  cells are marked. Indeed, then there are diagonals of~$P$ that,
  after the rotation~$s$, partition~$Q$ into pieces that are all
  convex, since all interior angles are smaller than~$180^{\circ}$.
  It follows that these pieces can be triangulated in any way we want, in
  particular as prescribed by the given triangulation $T$ of~$P$.

  We start by computing the visibility query data structure of
  Theorem~\ref{thm:ds} for~$Q$.  We also precompute, for each reflex
  vertex~$q_{i}$ of~$Q$, the edges of~$Q$ hit by the extension of the
  edges~$q_{i-1}q_{i}$ and~$q_{i+1}q_{i}$, in $O(n)$ time per
  vertex~$q_i$. 

  By Observation~\ref{obs:order}, applied to $Q$ instead of~$P$, this
  gives us an interval of indices~$\{\sigma_i, \dots, \tau_{i}-1\}$
  such that a diagonal~$q_{i} q_{j}$ lies in the wedge~$w_{i}$ if and
  only if $\sigma_{i} \leq j < \tau_{i}$.  (If $\sigma_{i}= \tau_{i}$,
  then no such diagonals exist, and note that the segment~$q_{i}
  q_{j}$ is not necessarily a diagonal even if~$\sigma_{i} \leq j < \tau_{i}$.)

  Consider now a fixed rotation~$s$ and a reflex vertex~$q_{i}$
  of~$Q$. The rotation~$s$ maps the vertex~$p_{i-s}$ onto~$q_{i}$, so
  we consider the set of diagonals of $T$ incident to~$p_{i-s}$.
  \begin{itemize}
  \item If there is an index~$j$ such that $T$ contains a
    diagonal~$p_{i-s} p_{j-s}$ with~$\sigma_{i} \leq j < \tau_{i}$,
    then we test if the segment $q_{i}q_{j}$ lies in~$Q$. If this is
    not the case, we can immediately exclude the rotation~$s$ from
    further consideration. If it exists, we mark the cell~$G[s, i]$.
    Note that we do not need to test \emph{all} indices~$j$ in the
    range~$\sigma_{i} \leq j < \tau_{i}$.
  
  \item If no such index~$j$ exists, then we find the triangle
    $p_{i-s} p_{j-s}p_{k-s}$ in $T$ with $k < \sigma_{i} < \tau_{i}
    \leq j$. We test if the segments~$q_{i}q_{k}$, $q_{i}q_{j}$,
    and~$q_{k}q_{j}$ lie in~$Q$.  (They might be diagonals or they
    might simply be edges of~$Q$.)  We mark the cell~$G[s, i]$
    accordingly.
  \end{itemize}

  Clearly the procedure above can be implemented by binary search in
  the list of edges of $T$ incident to~$p_{i-s}$, taking~$\bigO(\log
  n)$ time for testing each pair~$(s, i)$, which leads to an~$\bigO(n
  r \log n)$ time algorithm. 

  To improve this to~$\bigO(nr)$, we
  \emph{batch} these tests.  We take the indices~$\sigma_{i}$
  and~$\tau_{i}$ over \emph{all} reflex vertices~$q_{i}$, and sort
  them into a single common list~$R$ of length~$2r$.  We now consider
  the vertices~$p_{i}$ of~$P$ in turn.  For each vertex~$p_{i}$, we
  merge the list of the $\ell_{i}$~diagonals incident to~$p_{i}$ with
  the list~$R$ in time~$\bigO(r + \ell_{i})$.

  We only need to walk through this list once to find the diagonals to
  be tested.  Since $\sum_{i}\ell_{i} = \bigO(n)$, this approach leads
  to the promised~$\bigO(n \log n +nr)$ algorithm.

  As a final optimization, we observe that we do not actually need to
  construct the table~$G$.  It suffices to keep a counter for each
  rotation, increasing its value whenever we mark a cell in the
  corresponding row.  In the end we report the rotations where the
  counter has reached the value~$r$.
\end{proof}

\section{Vertex correspondence is given}
\label{sec:give-correspondence}

We are given two simple~$n$-gons, whose vertices are already
numbered~$0$ to~$n-1$ in counter-clockwise order.

We first compute, in~$\bigO(n^{2})$ time, the visibility graphs of
both~$P$ and~$Q$~\cite{AsanoAGHI86}, obtaining the set of
diagonals~$(i, j)$ that exist in both~$P$ and~$Q$.  We have now
abstracted away from the polygons~$P$ and~$Q$ and solve our problem
using the following theorem.
\begin{theorem}
  \label{thm:graph}
  Given a graph~$G$ drawn in the plane
  with straight edges (possibly with crossings)
  whose vertices are the vertices of a convex $n$-gon~$C$.
  We can determine in $\bigO(n^{\omega})$ time
  whether~$G$ contains a triangulation of~$C$, and return such a
  triangulation if it exists.
\end{theorem}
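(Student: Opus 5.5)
We use the standard interval dynamic program and then speed it up with Valiant's technique for context-free recognition. Number the vertices of~$C$ from $0$ to~$n-1$ along the boundary, and let $E[i,j]$ be true iff $i<j$ and either $(i,j)$ is an edge of~$G$ or $j=i+1$ (boundary edges are always available). Let $T[i,j]$, for $i<j$, mean that the convex subpolygon $i,i+1,\dots,j$ has a triangulation using only edges of~$G$. This holds iff $j=i+1$, or $E[i,j]$ holds and there is some $k$ with $i<k<j$, $T[i,k]$ and $T[k,j]$. Here $k$ is the apex of the triangle on $ij$. Then $G$ contains a triangulation of~$C$ iff $T[0,n-1]$ holds and $(0,n-1)\in G$. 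Since $C$ is convex, every chord lies inside~$C$, so crossings in the drawing do not matter; only the edge set is relevant. This recurrence is exactly Valiant's transitive closure of an upper-triangular matrix under a non-associative product. The product is $(X\ast Y)[i,j]=E[i,j]\wedge\bigvee_k X[i,k]\wedge Y[k,j]$: a Boolean matrix product followed by entrywise masking with~$E$. It distributes over~$\vee$, and any block of it can be computed with one Boolean matrix multiplication plus $O(n^2)$ work.

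To get $O(n^\omega)$ I will use the divide-and-conquer form of Valiant's algorithm rather than its general reduction. Split the index range $[0,n)$ into halves $A$ and~$B$, and compute the closures on $A\times A$ and $B\times B$ recursively. Then fill the off-diagonal block $A\times B$ with a procedure \ComputeBlock. When it is called on a block $I\times J$, two things are assumed. First, $T$ is final on $I\times I$ and $J\times J$. Second, the block already holds $\bigvee_{k\notin I\cup J}T[i,k]\wedge T[k,j]$ for all intervening indices~$k$ outside $I\cup J$. If the block is $1\times1$, the procedure ANDs the stored value with~$E$ and stops. Otherwise it splits $I=I_1I_2$ and $J=J_1J_2$ and processes the four sub-blocks in the order $I_2\times J_1$, $I_1\times J_1$, $I_2\times J_2$, $I_1\times J_2$. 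Before each recursive call, an \update{} step ORs in the missing contributions as Boolean products of already-final blocks. For example, $I_1\times J_1$ receives $T[I_1,I_2]\cdot T[I_2,J_1]$. The block $I_1\times J_2$ receives $T[I_1,I_2]\cdot T[I_2,J_2]$ and $T[I_1,J_1]\cdot T[J_1,J_2]$. Each update is one or two multiplications of matrices of size at most $m/2$ on an $m\times m$ block, so it costs $O(m^\omega)$.

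The main obstacle is the invariant. I must check that, after this ordering and these updates, every split point~$k$ for every pair $(i,j)$ is counted exactly when both $T[i,k]$ and $T[k,j]$ are already final. This is the usual delicate bookkeeping in Valiant/Rytter-style parsing, and I would verify it by case analysis on which of $I_1,I_2,J_1,J_2$ contains~$k$. The running time then follows from the recurrences $B(m)=4B(m/2)+O(m^\omega)=O(m^\omega)$, using $\omega>2$, and $T(n)=2T(n/2)+B(n/2)=O(n^\omega)$.

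To return a triangulation, I avoid witness-finding for Boolean products, which would cost a polylogarithmic factor. Instead I reconstruct top-down from the full table~$T$. Starting from $(0,n-1)$, for each interval $(i,j)$ with $j>i+1$, I scan the $k\in(i,j)$ for one with $T[i,k]\wedge T[k,j]$, emit triangle $ikj$, and recurse on $(i,k)$ and $(k,j)$. A triangulation has $n-2$ triangles and each scan takes $O(n)$ time, so reconstruction costs $O(n^2)\subseteq O(n^\omega)$.
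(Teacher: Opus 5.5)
Your proposal is essentially the paper's proof. It uses the same interval recurrence and the same Valiant-style recursion: your closure on the two halves followed by an off-diagonal \ComputeBlock{} is exactly the paper's split into the diagonal case $u=v$ and the case $u<v$. The block updates are the same, your sub-block order differs from the paper's $\alpha,\beta,\gamma,\delta$ only by swapping the independent blocks $\beta$ and $\gamma$, and the running-time recurrence $4B(m/2)+O(m^\omega)$ is the same. The case analysis you defer is the one the paper carries out, and your $O(n^2)$ top-down reconstruction supplies the ``return a triangulation'' part that the paper leaves implicit.

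One small fix is needed. In the $1\times 1$ base case you must set $T[i,i+1]$ to true regardless of the stored value, as the paper does with its test $v=u+1$. Otherwise, for $j=i+1$ the stored disjunction is empty and hence false, and ANDing it with $E$ would make every entry of the table false.
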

\begin{proof}
  Let~$A$ be the adjacency matrix of~$G$, and let~$B$ be the Boolean 
  matrix where~$b_{ij} = 1$ if the
  edge~$(i,j)$ exists and there is a triangulation of the
  subpolygon bounded by this edge and the chain from~$i$ to~$j$.
  Our goal is to determine if~$b_{0,n-1} = 1$.

  We can express~$B$ recursively:
  \begin{align}
    \label{eq:recursion}
    b_{ij} &= \biggl( \,\bigvee_{i < k < j} (b_{ik} \wedge b_{kj}) \biggr)
    \wedge  a_{ij},\text { for $0\le i,j<n$, $j\ge i+1$}
    \\  \label{eq:recursion-diag}
    b_{i,i+1} &=   a_{i,i+1} (=1),\text { for $0\le i<n-1$}
  \end{align}
  Note that~\eqref{eq:recursion-diag} is actually just a special case of
  \eqref{eq:recursion} because the disjunction over the empty range $i <
  k < i+1$ evaluates to \emph{true}.  We set the undefined values
  $b_{ij}$ for $j\le i$ to~$0$. This results in a matrix $B=(b_{ij})$
  that is strictly upper-diagonal.

  Equations~\eqref{eq:recursion} and~\eqref{eq:recursion-diag} can be
  written in terms of Boolean matrix multiplication:
  \begin{equation} \label{matrix}
    B =( B \otimes B) \land A
  \end{equation}
  Since~\eqref{matrix} defines $B$ in terms of itself, $B$ cannot be
  \emph{computed} straightforwardly by matrix multiplication.
  Nevertheless~$B$ is well-defined because it is strictly
  upper-diagonal.  The usual (sequential) way to compute the
  recursions~\eqref{eq:recursion} is according to increasing length
  $j-i$, or rowwise from left to right, or columnwise from bottom to
  top. One has to ensure that the quantities on the right-hand side
  of~\eqref{eq:recursion} are computed before they are used.

  We will in fact work over the integers, and define the following
  \emph{triplet operation}:
  \begin{equation}
    \label{eq:triple-Z}
    \update(i,j,k)\colon\quad
    \hat b_{ij} := \hat b_{ij} + b_{ik} \cdot b_{kj}
  \end{equation}
  For a fixed pair~$(i, j)$, we initialize~$\hat b_{ij}$ to zero and
  perform the triplet operation for all~$k$ with~$i < k <j$.  Once all
  products have been accumulated, we convert $\hat b_{ij}$ to its
  final Boolean value $b_{ij}=a_{ij} \land \min\{\hat b_{ij},1\}$.

  The key to subcubic running time is to perform the triplet operation
  in \emph{blocks}.  We assume that~$n$ is a power of two, and
  recursively partition the range of indices~$0, \dots, n-1$ into
  blocks of decreasing size.  For fixed indices~$0 \leq i < k < j <
  n$, the triplet operation~$\update(i, j, k)$ is performed at the
  largest block size~$S$ such that the three indices~$i$, $k$, and~$j$
  lie in different blocks.  At this block size, we can perform the
  triplet operation for all indices within the same blocks using a
  single matrix multiplication:
  \begin{equation}
    \label{eq:triple-block}
    \blockupdate(S,u,v,w)\colon\quad
    \hat B^S_{uv} := \hat B^S_{uv} + B^S_{uw} \cdot B^S_{wv}
  \end{equation}
  Here, $S$~is the block size, which is always a power of two,
  and~$B^S_{uv}$ is the $S\times S$ submatrix of entries $b_{ij}$
  formed from the rows in the range $uS\le i < (u+1)S$ and the columns
  in the range $vS\le j < (v+1)S$.  
  
  The operation~$\blockupdate(S, u, v, w)$ is possible for $0\le
  u<w<v<n/S$ (for $w=u$ or $w=v$ it would define $\hat B^S_{uv}$ in
  terms of itself, or, more precisely, in terms of~$B^S_{uv}$).

  \begin{figure}
    \begin{tabbing}
      \qquad\=\+
      $\ComputeBlock(S,u,v)$:\\
      \qquad\=\+
      \textbf{if} \= $S=1$:\+ /\!/ Finalize $b_{uv}$:\\
      \textbf{if} $v=u+1$ \textbf{or}
      ($\hat b_{uv}>0 \land a_{uv}=1$):\\
      \qquad\=
      $ b_{uv}:=1$\\
      \textbf{else}\\
      \>$ b_{uv}:=0$\-\\
      \textbf{else if} $u<v$: \+/\!/ compute the four submatrices $\alpha, \dots, \delta$\\
      $\alpha$: \ \= $\ComputeBlock(S/2,2u+1,2v)$ \\[4pt]
      $\beta$:\>$\blockupdate(S/2,2u+1,2v+1,2v)$\\
      \>$\ComputeBlock(S/2,2u+1,2v+1)$ \\[4pt]
      $\gamma$:\>$\blockupdate(S/2,2u,2v,2u+1)$\\
      \>$\ComputeBlock(S/2,2u,2v)$ \\[4pt]
      $\delta$:\>$\blockupdate(S/2,2u,2v+1,2u+1)$\\
      \>$\blockupdate(S/2,2u,2v+1,2v)$ \\
      \>$\ComputeBlock(S/2,2u,2v+1)$ \-\\[2pt]
      \textbf{else} \+ /\!/ the diagonal case $u=v$ simplifies, because the
      submatrix $\alpha$ does not exist\\
      $\beta$:\>$\ComputeBlock(S/2,2u+1,2u+1)$ \\
      $\gamma$:\>$\ComputeBlock(S/2,2u,2u)$ \\
      $\delta$:\>$\ComputeBlock(S/2,2u,2u+1)$\-
    \end{tabbing}
    \caption{The recursive procedure \ComputeBlock}
    \label{program}
  \end{figure}

  Our recursive algorithm is shown in \autoref{program}. It consists
  of a single procedure~\ComputeBlock, which we start as
  $\ComputeBlock(n,0,0)$ after initializing the~$n \times n$
  matrix~$\hat{B}$ to zeroes. The notations~$\hat b_{ij}$
  and~$b_{ij}$ in the algorithm refer to the same storage location.
  The notation~$b_{ij}$ is used to remind us that the variable takes
  only the values~$0$ and~$1$, while~$\hat b_{ij}$ contains an
  arbitrary non-negative integer.

  \ComputeBlock~performs recursive calls on up to four submatrices
  labeled~$\alpha$, $\beta$, $\gamma$, and~$\delta$, see
  Figure~\ref{fig:blockmatrix}, using the recursive partitioning
  \begin{equation}
    \label{eq:partition4}
    B^S_{uv}=
    \begin{pmatrix}
      B^{S/2}_{2u,2v} &   B^{S/2}_{2u,2v+1}\\
      B^{S/2}_{2u+1,2v} &   B^{S/2}_{2u+1,2v+1}\\
    \end{pmatrix}=
    \begin{pmatrix}
      \gamma &   \delta\\
      \alpha &   \beta\\
    \end{pmatrix}
  \end{equation}

  \begin{figure}
    \centerline{\includegraphics{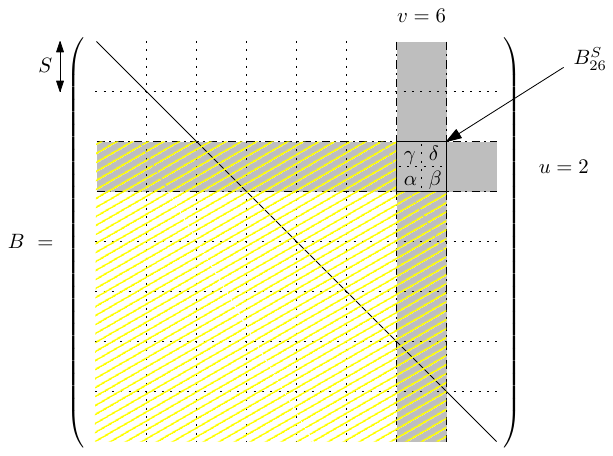}}
    \caption{The matrix $B$ is split into $8\times8$ square blocks of size $S=n/8$.
      The block $B^S_{uv}=B^S_{26}$ is shown to be composed of 4 subblocks of size
      $S/2$, labeled $\alpha,\beta,\gamma,\delta$ in the order in which they are recursively computed
      in the procedure $\ComputeBlock$.
    }
    \label{fig:blockmatrix}
  \end{figure}

  To prove correctness of the algorithm, we show that for each
  entry~$(i, j)$ of the matrix, we first compute the matrix product
  \[
  \hat{b}_{ij} = \sum_{i < k < j} b_{ik} \cdot b_{kj},
  \]
  over the integers.  In the base case of the recursion, where the
  block size~$S$ is one, we replace~$\hat{b}_{ij}$ by~$b_{ij} \in \{0,
  1\}$---this is where we perform the logical `and' operation with the
  adjacency matrix entry~$a_{ij}$.  At this point, we say that the entry~$(i, j)$ has then been
  \emph{finalized}. It remains unchanged from now on.

  \cparagraph{Precondition 1.} When $\ComputeBlock(S, u, v)$ is
  called, the entries $\hat{b}_{ij}$ in the block $B^S_{uv}$ fulfill
  the equation
  \begin{equation*}
  \hat b_{ij} = \sum_
       {
         \lfloor i/S\rfloor
         <\lfloor k/S\rfloor
         <\lfloor j/S\rfloor
       } b_{ik} \cdot b_{kj}.
  \end{equation*}
  In other words, all triplet operations $\update(i,j,k)$ for which
  $k$ lies in a block of size~$S$ different from~$i$ and~$j$ have
  already been performed.
  
  \cparagraph{Precondition 2.} When $\ComputeBlock(S, u, v)$ is
  called, all entries $b_{ij}$ with $i \geq uS$ and~$j < (v+1)S$ have
  already been finalized, except for the block~$B^S_{uv}$
  itself. (This is the hatched region in Figure~\ref{fig:blockmatrix}.)
       
  \cparagraph{Postcondition.} $\ComputeBlock(S, u, v)$ 
  finalizes the block~$B^S_{uv}$.  Together with Precondition~2, this
  means that all entries~$b_{ij}$ with~$i \geq uS$ and~$j < (v+1)S$
  have been finalized.

  \paragraph{Correctness.}
  Both preconditions clearly hold before the initial
  call~$\ComputeBlock(n, 0, 0)$.  The postcondition implies that when
  this call returns, the matrix~$B$ has been finalized and contains
  the result.

  Consider now a call to~$\ComputeBlock(S, u, v)$.

  If $S = 1$, precondition~1 guarantees that
  \[
  \hat{b}_{ij} = \sum_{i < k < j} b_{ik} \cdot b_{kj},
  \]
  so the call correctly computes~$b_{ij}$ and finalizes the entry.
  
  When~$S > 1$ and~$u = v$, then we observe that precondition~1 already
  holds for the recursive calls (in all recursive calls, $v \in \{u,
  u+1\}$, so there are no blocks of size~$S/2$ between~$u$ and~$v$).
  Precondition~2 for blocks~$\beta$ and~$\gamma$ already holds, and for
  block~$\delta$ if follows from the postconditions for~$\beta$
  and~$\gamma$.
  
  Finally, we consider~$S > 1$ and~$u < v$.  Observe that the
  \emph{only} blocks of size~$S/2$ between blocks~$2u$ and~$2v+1$ that
  are not already covered by blocks of size~$S$ are the blocks~$2u+1$
  and~$2v$.
  
  For block~$\alpha$, precondition~2 already holds, and there is no
  ``missing'' block of size~$S/2$.
  
  For block~$\beta$, precondition~2 follows from the postcondition of
  block~$\alpha$. The only missing block of size~$S/2$ is~$2v$, so we
  call $\blockupdate(S/2, 2u+1, 2v+1, 2v)$, which needs to
  access the blocks~$B^{S/2}_{2u+1,2v}$---this is block~$\alpha$, which
  has just been finalized---and block~$B^{S/2}_{2v,2v+1}$.  Since $2v
  \geq 2u + 2$, this block is already finalized by precondition~2.
  
  Similarly, for block~$\gamma$, the only missing block of size~$S/2$
  is~$2u+1$, so we call $\blockupdate(S/2,
  \allowbreak 2u, 2v, 2u+1)$.  This needs
  to access the blocks~$B^{S/2}_{2u,2u+1}$---since $2u+1 < 2v$, this
  block is already finalized by precondition~2---and
  block~$B_{2u+1,2v}^{S/2}$, which is block~$\alpha$.
  
  Finally, for block~$\delta$, precondition~2 follows from the
  postconditions of the three previous recursive calls.  The missing
  blocks are~$2u+1$ and~$2v$, so we call $\blockupdate(S/2, 2u, 2v+1,
  2u+1)$ and $\blockupdate(S/2, 2u, 2v, 2v)$.  These need to access
  blocks $B^{S/2}_{2u,2u+1}$, $\beta = B^{S/2}_{2u+1, 2v+1}$, $\gamma
  = B^{S/2}_{2u,2v}$, and $B^{S/2}_{2v, 2v+1}$.  Since $2u+1 < 2v$
  and~$2v \geq 2u+2$, all of these blocks are already finalized.

  \cparagraph{Runtime.}
  We let~$T(S)$ denote the running time of~$\ComputeBlock$ for a block
  of size~$S$.  Clearly $T(1) = \bigO(1)$ and $T(S)=\bigO(S^\omega)+4T(S/2)$
  for~$s > 1$, leading to $T(n)=\bigO(n^\omega)$, assuming $\omega>2$.
\end{proof}

\begin{theorem}
  Given simple \(n\)-gons \(P\) and \(Q\) with a fixed vertex
  numbering on both polygons, we can decide in \(O(n^\omega)\) time if
  there are compatible triangulations for \(P\) and \(Q\).
\end{theorem}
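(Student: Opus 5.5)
The plan is to reduce the statement to Theorem~\ref{thm:graph}, as sketched at the start of Section~\ref{sec:give-correspondence}.

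First, we compute the visibility graphs of $P$ and of $Q$ in $\bigO(n^2)$ time~\cite{AsanoAGHI86}. From them we form the graph $G$ on the index set $\{0,\dots,n-1\}$. Its edges are all boundary pairs $(i,i+1)$, together with all pairs $(i,j)$ such that the segment $p_ip_j$ is a diagonal of $P$ and $q_iq_j$ is a diagonal of $Q$. We draw $G$ with its vertices placed as the vertices of a convex $n$-gon $C$, in the same cyclic order. Building the adjacency matrix of $G$ takes $\bigO(n^2)$ time.

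The core step is to show that compatible triangulations of $P$ and $Q$ exist if and only if $G$ contains a triangulation of $C$.

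For the forward direction, take compatible triangulations of $P$ and $Q$. They share a common set $D$ of index pairs. Every pair in $D$ is a diagonal of both polygons, so $D \subseteq E(G)$. We must also check that $D$ forms a triangulation of $C$. Diagonals of a triangulation of a simple polygon are pairwise non-crossing inside the polygon. Two diagonals $(i,j)$ and $(k,l)$ of a simple polygon whose endpoints interleave in cyclic order must cross, because each diagonal splits the polygon into two parts that separate the other diagonal's endpoints. So the pairs in $D$ pairwise do not interleave, and hence they are non-crossing chords of $C$. Since there are $n-3$ of them, they triangulate $C$.

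For the converse, suppose $D \subseteq E(G)$ is a set of $n-3$ pairwise non-interleaving chords triangulating $C$. Each chord is a diagonal of $P$. Two diagonals of a simple polygon with non-interleaving endpoints do not cross: one lies entirely within a single sub-polygon cut off by the other, so at most they share an endpoint. Thus $D$ is a maximal set of non-crossing diagonals of $P$, that is, a triangulation of $P$. The same argument gives a triangulation of $Q$ with the same diagonal set, so the two triangulations are compatible.

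This combinatorial equivalence between interleaving in cyclic order and geometric crossing is the main point needing care. The fact I would lean on is that a diagonal of a simple polygon separates it into two simple polygons whose boundaries are the two chains between its endpoints.

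Finally, we apply Theorem~\ref{thm:graph} to $G$, which takes $\bigO(n^\omega)$ time and also returns the triangulation if one exists. Since $\omega\ge 2$, the $\bigO(n^2)$ preprocessing is dominated, giving $\bigO(n^\omega)$ in total.
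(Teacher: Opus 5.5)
Your proposal is correct and follows the paper's route: compute both visibility graphs in $\bigO(n^2)$ time, intersect them while keeping the polygon edges, including $(n-1,0)$, to get $G$, and invoke Theorem~\ref{thm:graph}. The paper calls this reduction ``easy'' and does not prove that compatible triangulations of $P$ and $Q$ correspond exactly to triangulations of $C$ contained in $G$; your argument that interleaving endpoints is equivalent to crossing for diagonals of a simple polygon supplies that missing justification.
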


\paragraph{Counting triangulations.}

If we work with the integer entries $\hat b_{ij}$ as they are, only
setting them to 0 when $ a_{ij}=0$, but otherwise keeping their
values intact, the result will be the number of triangulations.
The numbers can be as large as nearly $4^n$; hence the bit-size of
the numbers must then be taken into account in the analysis.

\paragraph{The word problem for context-free grammars in Chomsky
  normal form.}

Our method is an alternative description of ``Valiant's trick'' for
the context-free grammar word problem~\cite{Valiant-1975}, which is
also known as ``interval dynamic programming.''  There is a slight
difference in the setup: we have to take the logical \emph{and}
with~$A$ before using an entry for further computations.  Valiant, on
the other hand, has a more general ``multiplication'' operation that
takes into account sets of nonterminal symbols.

Apart from these differences, the algorithms are structurally the
same, and our description is an alternative description of Valiant's
algorithm.  We use Valiant's procedures $P_2$, $P_3$, $P_4$ and the
lemma in~\cite[Section~4]{Valiant-1975}: the procedure $P_2$ in
Valiant's algorithm corresponds to our procedure~$\ComputeBlock$.
Indeed, when looking how Valiant's proof of his Theorem~2 reduces
$P_2$ for an $n\times n$ block to recursive calls of $P_2$ for $\frac
n2\times \frac n2$ blocks, via intermediate calls of $P_3$ and $P_3$,
one finds 4 recursive calls to $P_2$, just as in our procedure
$\ComputeBlock$. Each application of the lemma corresponds to one or
two calls of \textit{block-update} that precede the call of~$P_2$.

\paragraph{Can we do better?}

\begin{lemma}
  \label{lem:notbetter}
  The matrix \(B\) as defined above cannot be computed faster than
  Boolean matrix multiplication.
\end{lemma}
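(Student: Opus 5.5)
We reduce Boolean matrix multiplication to computing $B$. Let $X$ and $Y$ be Boolean $m\times m$ matrices; we want $Z=X\otimes Y$, i.e.\ $z_{pq}=\bigvee_r x_{pr}\wedge y_{rq}$. We build an adjacency matrix $A$ on $n=3m$ vertices, split into three consecutive index groups: $I=\{0,\dots,m-1\}$, $K=\{m,\dots,2m-1\}$ and $J=\{2m,\dots,3m-1\}$. We then read $Z$ off the block $B_{IJ}$.

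The first step makes the entries of $B$ inside each group, and between consecutive indices, behave as prescribed inputs rather than as computed values. Entries of $B$ in the blocks $I\times K$ and $K\times J$ should equal $X$ and $Y$ respectively. For this to be true, a pair $(i,k)$ with $i\in I$, $k\in K$ and $x_{i,k-m}=1$ must satisfy $b_{ik}=1$. That requires both $a_{ik}=1$ and the existence of some $t$ with $i<t<k$ and $b_{it}=b_{tk}=1$. So we set $a$ to $1$ for \emph{every} pair inside $I$, inside $K$ and inside $J$. By induction on $j-i$, this makes $b_{ij}=1$ for all $i<j$ in the same group: use $t=i+1$, where $b_{i,i+1}=1$ by~\eqref{eq:recursion-diag}. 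We also set $a_{ik}=x_{i,k-m}$ for $i\in I$, $k\in K$, and $a_{kj}=y_{k-m,j-2m}$ for $k\in K$, $j\in J$.

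The difficulty is that $b_{ik}$ for $i\in I$, $k\in K$ now equals $a_{ik}$ only if some intermediate $t$ works. The intermediate $t=m-1$ or $t=m$ gives chains through $b_{t,k}$ that are themselves cross entries. To avoid this self-reference, we interleave padding. We add an extra vertex $c$ between $I$ and $K$ and connect $c$ to everything, so that $a_{ic}=a_{ck}=1$. By the in-group argument applied to $I\cup\{c\}$ and $\{c\}\cup K$, we get $b_{ic}=b_{ck}=1$. Hence $b_{ik}=a_{ik}=x_{i,k-m}$ exactly. Symmetrically, we add a vertex $c'$ between $K$ and $J$ connected to everything, giving $b_{kj}=y_{k-m,j-2m}$. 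We must check that $c$ adjacent to all vertices does not disturb these identities: $b_{ik}$ is simply $a_{ik}$ (the disjunction is true via $c$), which is what we want.

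Next comes the readout. For $i\in I$ and $j\in J$ we set $a_{ij}=1$. Then $b_{ij}=\bigvee_{i<t<j} b_{it}\wedge b_{tj}$. The intermediate $t$ ranges over the rest of $I$, $c$, $K$, $c'$ and the start of $J$. The terms with $t\in K$ give exactly $z_{i-m',j}$ (after index shifts), but the other $t$ pollute the result: for example, $t=c$ gives $b_{ic}\wedge b_{cj}$, and $b_{cj}=1$. This is the main obstacle. To fix it, we make $c$ and $c'$ adjacent only to what is needed, and we also set $a_{cj}=0$ for $j\in J$ and $a_{ic'}=0$ for $i\in I$. We likewise forbid other long edges: $a_{i',j}=0$ for $i'\in I\setminus\{i\}$ would be impossible, since that pattern changes with $i$. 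So we instead restrict the readout to a single pair per row/column. We take $Z$'s entry $(p,q)$ from $b_{i_p j_q}$, where $i_p$ is the \emph{last}-indexed relevant vertex. Concretely, we reverse the ordering inside $I$ so that the relevant $i$ is the largest, and also set $a_{t,j}=0$ for all $t\in I\cup\{c\}$ with $t\neq$ sources, so that $b_{tj}=0$ kills polluting terms. The cleanest way to achieve this is to use only one row per evaluation. I would instead make the construction bipartite: set $a_{ij}=1$ only for $i\in I$, $j\in J$, and set $a_{tj}=0$ for $t\in I\cup\{c\}$. Then $b_{tj}=0$ for those $t$, and symmetrically $b_{it}=0$ for $t\in\{c'\}\cup J$ by setting $a_{i,c'}=a_{i,t}=0$ for $t\in J$ except the target $j$. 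Since $b_{it}$ with $t\in J$ smaller than $j$ is $0$ (its $a$ entry is $0$ as a cross $I$--$J$ pair… which conflicts with needing $a_{it}=1$ when $t$ is itself a target). I would resolve this by placing the targets in $J$ and letting $J$ carry no internal edges except what keeps $b_{kj}$ correct via $c'$. This works because any term $b_{it}\wedge b_{tj}$ with $t\in J$ then requires $b_{tj}$, and $b_{tj}=0$ since $t,j\in J$ have no path of inner edges.

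With the padding fixed so that the only nonzero chains from $I$ to $J$ go through $K$, we obtain $b_{ij}=z_{ij}$. The instance has size $O(m)$ and is built in $O(m^2)$ time. Hence any algorithm computing $B$ in time $o(M(n))$ would multiply Boolean matrices equally fast. The step I expect to need the most care is this bookkeeping of $a$: exactly which pairs are edges, so that the cross blocks $I\times K$ and $K\times J$ copy $X$ and $Y$ while no polluting intermediate $t\notin K$ contributes to $B_{IJ}$.
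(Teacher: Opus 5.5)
Your hub vertices $c$ and $c'$ (they play the role of the paper's $u$ and $w$) correctly force $b_{ik}=a_{ik}$ and $b_{kj}=a_{kj}$. The readout step, however, has a genuine gap, and your final edge set is never fixed consistently: you ask for $a_{ij}=1$ for all $i\in I$, $j\in J$ and at the same time $a_{tj}=0$ for $t\in I$.

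More fundamentally, the pollution cannot be removed by deleting chords. Consecutive vertices are always joined by a polygon edge, so $b_{t,t+1}=1$ unconditionally.
\begin{itemize}
\item For $j\in J$, the apex $t=j-1$ always contributes $b_{i,j-1}\wedge b_{j-1,j}=b_{i,j-1}$. Your claim that $b_{tj}=0$ for $t,j\in J$ fails exactly at $t=j-1$.
\item For $i\in I$, the apex $t=i+1$ contributes $b_{i+1,j}$, because $a_{i+1,j}=1$ is needed to read off the neighbouring entry.
\end{itemize}
So whatever chords you place inside $I$ and $J$, and even with $a_{cj}=a_{ic'}=0$, you get (indexing by positions within $I$ and $J$) $b_{ij}=\bigvee_{i'\ge i,\ j'\le j} z_{i'j'}$. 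This is a dominance closure of $Z$, and $Z$ cannot be recovered from it: the matrix whose only $1$ is $z_{m-1,0}$ has the same closure as the all-ones matrix. Falling back to one entry or one row per instance would need $m$ or $m^2$ instances and no longer yields the lower bound.

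The missing idea is to separate consecutive readout vertices by spacers that see nothing on the opposite side. Take the vertex order
$i_0,p_0,i_1,p_1,\dots,i_{m-1},p_{m-1},\,c,\,k_0,\dots,k_{m-1},\,c',\,q_0,j_0,q_1,j_1,\dots,q_{m-1},j_{m-1}$.
Besides the polygon edges, use exactly these edges:
\begin{itemize}
\item join $c$ to every $i_r,p_r,k_\ell$, and join $c'$ to every $k_\ell,q_s,j_s$;
\item add $i_rk_\ell$ iff $x_{r\ell}=1$, add $k_\ell j_s$ iff $y_{\ell s}=1$, and add all $i_rj_s$.
\end{itemize}
Fans from $c$ and $c'$ give $b_{i_rc}=b_{ck_\ell}=b_{k_\ell c'}=b_{c'j_s}=1$, hence $b_{i_rk_\ell}=x_{r\ell}$ and $b_{k_\ell j_s}=y_{\ell s}$.

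For the pair $(i_r,j_s)$, every apex outside $K$ now fails:
\begin{itemize}
\item the only neighbour of $i_r$ strictly between $i_r$ and $c$ is $p_r$, which has no edge to $J$;
\item the only neighbour of $j_s$ strictly between $c'$ and $j_s$ is $q_s$, which has no edge to $I$;
\item $c$ has no edge to $J$, and $c'$ has none to $I$.
\end{itemize}
Hence $b_{i_rj_s}=z_{rs}$, on $5m+2$ vertices.

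The paper's construction has the same structure as yours. Its assertion that the triangle on $x_iz_j$ can only be $\triangle x_iy_kz_j$ overlooks the apexes $x_{i+1}$ and $z_{j-1}$ in exactly the same way, and the same spacer repair fixes it.
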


\begin{proof}
Given two Boolean $n \times n$ matrices~$M$ and~$N$, we can
construct a graph~$G$ by starting with a cycle of vertices
\[
s, x_1, x_2, ..., x_n, u, y_1, ... y_n, w, z_1, ... z_n, t.
\]
We then add diagonals as follows:
\begin{itemize}
\item diagonal $(x_i,y_k)$ for all $i, k$ where $M_{ik} = 1$,
\item diagonal $(y_k, z_j)$ for all $k, j$ where $N_{kj} = 1$,
\item all diagonals $(s,x_i)$, for all~$i$,
\item all diagonals $(s,z_j)$, for all~$j$,
\item all diagonals $(x_i,z_j)$, for all~$i$ and~$j$,
\item all diagonals $(x_i,u)$, for all~$i$,
\item all diagonals $(u,y_k)$, for all~$k$,
\item all diagonals $(y_k, w)$, for all~$k$,
\item all diagonals $(w, z_j)$, for all~$j$,
\item all diagonals $(z_j, t)$, for all~$j$.
\end{itemize}

If we compute the matrix~$B$ as defined above for this graph~$G$, then
for each pair~$(x_i, z_j)$ we will know whether there is a
triangulation of the subpolygon bounded by the chain~$x_i, x_{i+1},
\dots, \allowbreak z_{j-1}, z_{j}$.  If that is the case, then the triangle
incident to the edge $x_i z_j$ can only be $\triangle x_i y_k z_j$,
and the two edges necessary for that triangle exist only if $a_{ik} =
1$ and~$b_{kj} = 1$, so $(M \cdot N)_{ij} = 1$.  We leave it to the
reader to argue that whenever $(M \cdot N)_{ij} = 1$, then a
triangulation of the subpolygon exists.
\end{proof}

Put differently, any algorithm that returns not only a Yes/No-answer,
but that can tell us which other vertices can appear as the third
corner of the triangle incident to the edge~$st$, and also its two
neighboring triangles, must take at least the time required for
Boolean matrix multiplication.  Indeed, if~$G$ contains a
triangulation, the triangle incident to $st$ must be of the form
$\triangle s t z_j$.  The other triangle incident to the edge $sz_j$
must necessarily be of the form $\triangle s x_i z_j$, so we know
that~$(M \cdot N)_{ij} = 1$. On the other hand, if $(M \cdot N)_{ij} =
1$, then a triangulation of~$G$ exists that contains the
triangles~$\triangle s t z_{j}$ and~$\triangle s x_{i} z_{j}$.

This argument
implies that a faster algorithm can only be obtained by
exploiting the fact that the graph~$G$ is not an
arbitrary graph, but the intersection of the visibility graphs of
two simple polygons.

\bibliography{references}

\end{document}